\numberwithin{equation}{section}
\declaretheoremstyle[bodyfont=\it,qed=\qedsymbol]{noproofstyle}
\declaretheorem[numberlike=equation]{observation}
\declaretheorem[name=Observation,numbered=no]{observation*}
\declaretheorem[numberlike=equation]{theorem}
\declaretheorem[name=Theorem,numbered=no]{theorem*}
\declaretheorem[numberlike=equation]{lemma}
\declaretheorem[name=Lemma,numbered=no]{lemma*}
\declaretheorem[numberlike=equation]{corollary}
\declaretheorem[name=Corollary,numbered=no]{corollary*}
\declaretheorem[name=Proposition,numbered=no]{proposition*}
\declaretheorem[name=Claim,numbered=no]{claim*}
\declaretheorem[name=Conjecture,numbered=no]{conjecture*}
\declaretheorem[name=Question,numbered=no]{question*}
\declaretheoremstyle[bodyfont=\it,qed=$\lozenge$]{defstyle} 
\declaretheorem[numberlike=equation,style=defstyle]{definition}
\declaretheorem[unnumbered,name=Definition,style=defstyle]{definition*}
\declaretheorem[unnumbered,name=Example,style=defstyle]{example*}
\declaretheorem[unnumbered,name=Notation=defstyle]{notation*}
\declaretheorem[unnumbered,name=Construction,style=defstyle]{construction*}
\declaretheorem[numberlike=equation,style=defstyle]{remark}
\declaretheorem[unnumbered,name=Remark,style=defstyle]{remark*}
\newcommand{\shortECCC}[2]{\texttt{\href{http://eccc.weizmann.ac.il/report/\ifnumcomp{#1}{>}{93}{19}{20}#1/#2/}{eccc:TR#1-#2}}}
\newcommand{\parseECCC}[1]{% Takes a string of the form TRxx/xxx or
%                          % TRxx-xxx and returns short ECCC link
\StrSubstitute{#1}{TR}{}[\tmpstring]%
\IfSubStr{\tmpstring}{/}{ %assuming string is of the form TRxx/xxx
\StrBefore{\tmpstring}{/}[\ecccyear]%
\StrBehind{\tmpstring}{/}[\ecccreport]%
}{% assuming string is of the form TRxx-xxx
\StrBefore{\tmpstring}{-}[\ecccyear]%
\StrBehind{\tmpstring}{-}[\ecccreport]%
}%
\shortECCC{\ecccyear}{\ecccreport}}
\renewcommand{\vec}[1]{{\mathbf{#1}}}
\newcommand{\va}{{\vec{a}}\@ifnextchar{^}{\!\:}{}}
\newcommand{\vb}{{\vec{b}}\@ifnextchar{^}{\!\:}{}}
\newcommand{\vc}{{\vec{c}}\@ifnextchar{^}{\!\:}{}}
\newcommand{\vd}{{\vec{d}}\@ifnextchar{^}{\!\:}{}}
\newcommand{\ve}{{\vec{e}}\@ifnextchar{^}{\!\:}{}}
\newcommand{\vy}{{\vec{y}}\@ifnextchar{^}{\!\:}{}}
\newcommand{\vs}{{\vec{s}}\@ifnextchar{^}{\!\:}{}}
\newcommand{\vt}{{\vec{t}}\@ifnextchar{^}{\!\:}{}}
\newcommand{\vx}{{\vec{x}}\@ifnextchar{^}{}{}}		%\vec{x} seems fine already
\newcommand{\vz}{{\vec{z}}\@ifnextchar{^}{\!\:}{}}
\newcommand{\vw}{{\vec{w}}\@ifnextchar{^}{}{}}		%\vec{x} seems fine already
\newcommand{\vh}{{\vec{h}}\@ifnextchar{^}{\!\:}{}}
\newcommand{\vg}{{\vec{g}}\@ifnextchar{^}{\!\:}{}}
\newcommand{\vv}{{\vec{v}}\@ifnextchar{^}{\!\:}{}}
\newcommand{\vp}{{\vec{p}}\@ifnextchar{^}{\!\:}{}}
\newcommand{\vq}{{\vec{q}}\@ifnextchar{^}{\!\:}{}}
\newcommand{\vY}{{\vec{Y}}\@ifnextchar{^}{\!\:}{}}
\newcommand{\vX}{{\vec{X}}\@ifnextchar{^}{}{}}		%\vec{x} seems fine already
\newcommand{\vZ}{{\vec{Z}}\@ifnextchar{^}{\!\:}{}}
\newcommand{\vG}{{\vec{G}}\@ifnextchar{^}{\!\:}{}}
\newcommand{\F}{\mathbb{F}}
\newcommand{\Q}{\mathbb{Q}}
\newcommand{\N}{\mathbb{N}}
\newcommand{\Z}{\mathbb{Z}}
\newcommand{\set}[1]{\left\{#1\right\}}
\def\epsilon{\varepsilon}
\newcommand{\Cnote}[1]{\begin{quote}{\textcolor{violet}{{\sf Chi-Ning's Note:} {#1}}}\end{quote}}
\newcommand{\vu}{\mathbf{u}}
\date{}
\title{Closure of $\VP$ under taking factors: a short and simple proof}
\author{
	Chi-Ning Chou\thanks{School of Engineering and Applied Sciences, Harvard University, Cambridge, Massachusetts, USA. Supported by Boaz Barak's NSF awards CCF 1565264 and CNS 1618026. Email: \texttt{chiningchou@g.harvard.edu}.}
	\and
	Mrinal Kumar\thanks{Department of Computer Science, University of Toronto, Canada. A part of this work was done during the semester on Lower bounds in Computational Complexity at  Simons Institute for the Theory of Computing, Berkeley, CA, USA. Email: \texttt{mrinalkumar08@gmail.com}.}
	\and
	Noam Solomon\thanks{Department of Mathematics, MIT, Cambridge, Massachusetts, USA. Email: \texttt{noam.solom@gmail.com}.}
}
\begin{document}
	
\maketitle

\begin{abstract}
In this note, we give a short, simple and almost completely self contained proof of a classical result of Kaltofen~\cite{Kal86, Kal87, kalt89} which shows that if an $n$ variate degree $d$ polynomial $f$ can be computed by an arithmetic circuit of size $s$, then each of its factors can be computed by an arithmetic circuit of size at most $\poly\left(s, n, d\right)$. 

However, unlike Kaltofen's argument, our proof  does not  directly give an efficient algorithm for computing the circuits for the factors of $f$. 
\end{abstract}

\thispagestyle{empty}
%\newpage
\pagenumbering{arabic}
\iffalse
\section*{Things to do}
\begin{itemize}
\item Finishing up the proof of the pure power case. 
\item Adding in preliminaries and a short intro. 
\item Some further clean up is needed in the notations
\item we should perhaps compare the proof with the two main proofs we know, Kaltofen's original proof, and Burgisser's argument via univariate polynomials. 
\end{itemize}
\fi
\section{Introduction}
\textit{Polynomial factorization} is a fundamental problem at the intersection of algebra and computation and has been intensively studied in algebraic complexity theory. Given a multivariate polynomial $f\in\F[x_1,x_2,\dots,x_n]$, the goal is to find  irreducible factors of $f$. The nature of these algorithms as well as their efficiency varies depending upon on how the input polynomial is given. Two natural representations which are often used in this context are the monomial representation (where the polynomial is given as sum of its monomials) and the circuit representation (where the polynomial is given as an arithmetic circuit). In this note, we focus on the latter. In this setting, we are given an arithmetic circuit computing a multivariate polynomial, and the goal is to output arithmetic circuits for all its irreducible factors. The problem has been studied in both the whitebox setting (where we have access to the internal wirings of the input circuit) and in the blackbox setting (where we only have query access to the input circuit). In a sequence of extremely influential results in the 1980's~\cite{Kal86, Kal87, kalt89, KT90},  Kaltofen (and Kaltofen and Trager~\cite{KT90}) gave efficient randomized algorithms for this problem. A consequence of these results which has had extremely interesting applications in algebraic complexity theory~\cite{ki03} is that if an $n$-variate degree $d$ polynomial has an arithmetic circuit of size $s$, then each of its factors has an arithmetic circuit of size $\poly(s, n, d)$. In other words, the complexity class $\VP$ of polynomials is closed under taking factors. 

In addition to being natural mathematical questions on their own, these \emph{closure results} for polynomial factorization seem crucial to our current understanding of hardness randomness tradeoffs in algebraic complexity~\cite{ki03, DSY09, CKS18A}. In this note, we give a short, simple and almost completely self contained proof of the closure of $\VP$ under taking factors. More formally, we give a new\footnote{as far as we know.} proof of the following result of Kaltofen. 
%Since the 1960's, efficient algorithms for polynomial factorization as well as many applications in algebraic complexity, coding theory, and cryptography have been discovered. The com
%For more background and applications of polynomial factorization, we refer the interested reader to a recent survey of Forbes and Shpilka~\cite{FS15}.

%In this note, we focus on the following complexity question about polynomial factorization.
%\begin{center}
%Suppose $f$ is in the complexity class $\class{C}$, does its factor(s) also lie in $\class{C}$?
%\end{center}
%The seminal work by Kaltofen~\cite{kalt89} in the 1980's showed that this is the case for $\class{C}=\VP$, the class of polynomials that have polynomial degree and can be computed by polynomial size arithmetic circuits.
\begin{theorem}[Kaltofen]\label{thm:main}
Let $f \in \F[x_1,x_2,\dots,x_{n-1}, y]$ be an $n$-variate degree $d$ polynomial which can be computed by an arithmetic circuit of size $s$. Let $g$ be a polynomial such that $g$ divides $f$. Then, $g$ can be computed by an arithmetic circuit of size at most $\poly(s,n,d)$. 
\end{theorem}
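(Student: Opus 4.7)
The plan is to Hensel-lift $f$ over the formal power series ring $\F[\![\vec{x}]\!]$, implement the lift as an arithmetic circuit via Newton's iteration, and then recover $g$ by truncation. First, by multiplicativity of circuit size under products, we may assume $g$ is irreducible. By applying a random invertible affine change of coordinates on $(x_1,\dots,x_{n-1},y)$, which preserves circuit size up to a constant factor, we may additionally assume that $f$ is monic in $y$ of degree $d_y := \deg_y f$ and that the univariate polynomial $f(\vec{0},y) \in \F[y]$ is separable; equivalently, the discriminant $\mathrm{disc}_y(f) \in \F[\vec{x}]$ is nonzero and nonvanishing at $\vec{0}$. A subtlety is that this separability reduction implicitly requires $f$ itself to have no repeated irreducible factors in $y$, so the case where $g$ appears in $f$ with multiplicity at least two needs separate treatment (the ``pure power'' case).

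Under these assumptions, Hensel's lemma yields a complete factorization
\[
f(\vec{x},y) \;=\; \prod_{i=1}^{d_y}\bigl(y-\alpha_i(\vec{x})\bigr), \qquad \alpha_i(\vec{x}) \in \F[\![\vec{x}]\!],
\]
with $\alpha_1(\vec{0}),\dots,\alpha_{d_y}(\vec{0})$ equal to the distinct roots of $f(\vec{0},y)$. Since $\F[\vec{x},y]$ embeds into the UFD $\F[\![\vec{x}]\!][y]$ and factorization there refines factorization in $\F[\vec{x},y]$, the factor $g$ must have the form $g(\vec{x},y)=\prod_{i\in S}\bigl(y-\alpha_i(\vec{x})\bigr)$ for a unique subset $S \subseteq [d_y]$. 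Because $\deg g \leq d$, the polynomial $g$ is recovered exactly as the truncation of this product modulo $(x_1,\dots,x_{n-1})^{d+1}$, so it suffices to build a small circuit for each truncated power series $\alpha_i(\vec{x}) \bmod (x_1,\dots,x_{n-1})^{d+1}$.

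Each truncated $\alpha_i$ is computed by $O(\log d)$ rounds of Newton's iteration,
\[
\alpha_i^{(k+1)} \;:=\; \alpha_i^{(k)} \;-\; \frac{f(\vec{x},\alpha_i^{(k)})}{\partial_y f(\vec{x},\alpha_i^{(k)})},
\]
starting from the constant $\alpha_i^{(0)} := \alpha_i(\vec{0})$. Separability ensures that $\partial_y f(\vec{x},\alpha_i^{(k)})$ has nonzero constant term, so its power-series reciprocal is itself computable by a nested Newton iteration $\beta \mapsto \beta(2-c\beta)$. Each step adds $O(s)$ gates to the running circuit (one evaluation each of $f$ and $\partial_y f$, plus a constant number of arithmetic operations); the $\vec{x}$-accuracy doubles per step; and after $O(\log d)$ steps we obtain the required truncation. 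Multiplying over $i \in S$ and truncating modulo $(x_1,\dots,x_{n-1})^{d+1}$ produces a circuit for $g$ of size $\poly(s,n,d)$.

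The principal technical obstacle is the pure-power case: if $g^e$ divides $f$ with $e \geq 2$, then the $\alpha_i$ corresponding to $g$ appear as repeated roots of $f(\vec{0},y)$, the discriminant $\mathrm{disc}_y(f)$ vanishes identically, and the separability reduction breaks down. The textbook remedy is to pass to the squarefree part $f/\gcd(f,\partial_y f)$, in which $g$ is a simple factor, but producing a circuit for this gcd is itself a closure property that must be proved by a separate argument (perhaps by a Newton-style extraction of $e$-th roots in $\F[\![\vec{x}]\!][y]$). A secondary and explicitly acknowledged point is that the construction is non-algorithmic: the initial constants $\alpha_i(\vec{0})$ and the subset $S$ are hardcoded into the circuit and require univariate factorization to identify, so the circuit for $g$ is shown to exist but is not efficiently producible from the circuit for $f$, matching the tradeoff advertised in the abstract.
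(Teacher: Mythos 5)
Your argument for the squarefree case is sound and runs along a genuinely different track than the paper's. You first apply a random affine change of coordinates and Hensel-lift $f$ all the way to \emph{linear} factors $y - \alpha_i(\vec{x})$ over $\F[\![\vec{x}]\!]$, then compute each truncated root by quadratically convergent (univariate-in-$y$) Newton iteration and take the subproduct over $S$. The paper instead stops at the two-factor level $f = g\cdot h$, writes the coefficient identities $Q_\ell(\vg,\vh) = 0$ obtained by comparing $y$-coefficients, and runs a \emph{multivariate} Newton step on the whole coefficient vector $(\vg,\vh)$ at once, advancing the $\vx$-precision by one degree per round (so $d+1$ rounds, not $O(\log d)$). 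The key structural observation driving the paper's iteration is that the Jacobian of the system $(Q_0,\dots,Q_{d-1})$ with respect to $(\vu,\vw)$ is, up to row/column permutation, the Sylvester matrix of $g$ and $h$, hence non-singular exactly when $\mathsf{GCD}_y(g,h)=1$. This avoids any complete factorization, any power-series division, and any need for $f(\vec{0},y)$ to be separable; it only needs $g$ and $h$ coprime. Both routes deliver $\poly(s,n,d)$ for the coprime case, but the paper's version is more self-contained (no reciprocal subroutine, no root-finding over $\F$, no Hensel's lemma beyond the identity $f=gh$ itself).

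The genuine gap is the pure-power case, which you explicitly flag as unresolved. Your suggested remedy --- pass to the squarefree part $f/\gcd_y(f,\partial_y f)$ --- is circular as you note, since bounding the circuit size of that gcd is itself an instance of the closure theorem you're trying to prove, and a ``Newton-style $e$-th root extraction'' is a sketch rather than an argument. The paper closes this gap with a short trick you missed: if $f = g^e$, introduce a fresh variable $z$ and consider $\tilde f := z^e - f = z^e - g^e$, which has circuit size $s + O(\log d)$ via repeated squaring. Over $\F$ of characteristic zero or large enough, $\tilde f$ factors as
\[
\tilde f \;=\; (z - g)\cdot\bigl(z^{e-1} + z^{e-2}g + \cdots + g^{e-1}\bigr),
\]
and these two factors are coprime in $z$ (the linear factor $z-g$ is irreducible and does not divide the second factor when $g \not\equiv 0$). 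Applying the coprime-case lemma to $\tilde f$ with $z$ in the role of the distinguished variable gives a small circuit for $z - g$, hence for $g$. With this reduction in hand, the general theorem follows by induction on the number of distinct irreducible factors of $f$: either $f$ is a pure power of an irreducible (handled by the trick above) or $f$ splits as a product of two coprime factors (handled by the Newton/Jacobian lemma), and recursing on the piece containing $g$ terminates in at most $\log_2 d$ steps. Without some such device, your proof as written establishes closure only for factors of squarefree-in-$y$ polynomials, which is strictly weaker than the stated theorem.
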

The original proof of~\autoref{thm:main} relies on some beautiful and neat mathematical ideas like Hensel's lifting, effective Hilbert's Irreducibility Theorem, etc, which are useful and interesting on their own.  For our proof, we only rely on a simple and natural multivariate version of the classical Newton Iteration technique and the fact that the Resultant of two univariates tells us exactly when they have a non-trivial greatest common divisor $(\mathsf{GCD})$. We hope that this simpler proof can shed some more insight on this closure result (and hopefully some others, which are yet to be discovered), and is more accessible to readers with a less detailed background in algebra. A cost of this \emph{simplicity} is that unlike in the work of Kaltofen, we do not get an algorithm for factoring multivariate polynomials given by arithmetic circuits. 

Besides Kaltofen's original proof, there is a considerably simpler proof due to B\"{u}rgisser~\cite{bur04} showing that $\VP$ is closed under taking factors. B\"{u}rgisser uses the classical univariate Newton Iteration to  obtain a power series approximation of a root of a multivariate polynomial when it is viewed as a univariate in one of the variables. This power series approximation of the root to a sufficiently high enough accuracy is then used to obtain an appropriate irreducible factor of the input polynomial. This step requires setting up and solving an appropriate system of linear equations. A variant of this argument is also present in the works of Dvir et al.~\cite{DSY09}, of Oliveira~\cite{O16}, of Dutta et al.~\cite{DSS17} and an earlier work of the authors~\cite{CKS18A, CKS18B}. At a high level, these proofs go via an iterative step to approximate a root (or many roots), and a clean up step where a factor is recovered from this approximation. 

In our proof, we directly recover the factors at the end of the slightly more complicated iterative step (a \emph{multivariate} Newton iteration as opposed to a \emph{univariate} one), and the clean up is essentially trivial.  

Our proof follows immediately from the following two lemmas. 
\begin{lemma}\label{clm:distinct factors}
Let $f \in \F[x_1, x_2, \ldots, x_{n-1}, y]$ be an $n$-variate degree $d$ polynomial which can be computed by an arithmetic circuit of size at most $s$.  If $g$ and $h$ are polynomials of degree at least $1$ such that $f = g\cdot h$ and $\mathsf{GCD}(g,h) = 1$, then $g$ and $h$ have a circuit of size at most $\poly(s, n, d)$. 
\end{lemma}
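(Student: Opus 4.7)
The plan is to set up a multivariate Newton (Hensel-style) iteration with respect to the $(\vx)$-adic filtration on $\F[\vx][y]$, where $\vx = (x_1, \ldots, x_{n-1})$, and to build arithmetic circuits for $g$ and $h$ one $\vx$-homogeneous piece at a time. First, via a generic linear change of variables on $\F^n$ together with a scalar rescaling (passing to an extension of $\F$ if necessary), I would normalize so that $f, g, h$ are all monic in $y$ with $\deg_y g = \deg g =: d_g$, $\deg_y h = \deg h =: d_h$, and so that the univariate polynomials $G_0(y) := g(0, \ldots, 0, y)$ and $H_0(y) := h(0, \ldots, 0, y)$ are coprime in $\F[y]$. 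The coprimality is generic because $\mathsf{GCD}(g,h) = 1$ in $\F[\vx, y]$ implies that $\mathsf{Res}_y(g, h) \in \F[\vx]$ is a nonzero polynomial, so its vanishing locus is a proper subvariety that a generic translate of the origin escapes. Undoing the linear change of variables at the end costs only an additive $O(n)$ in circuit size.

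Next, for any $P \in \F[\vx, y]$, write $P = \sum_{i \geq 0} P_i$ where $P_i$ is the component homogeneous of degree $i$ in $\vx$. Expanding $f = gh$ by $\vx$-degree, for each $i \geq 1$ one obtains
\[
    G_0 H_i + H_0 G_i \;=\; F_i - \sum_{j=1}^{i-1} G_j H_{i-j} \;=:\; R_i.
\]
Since $\mathsf{GCD}(G_0, H_0) = 1$ in $\F[y]$, there exist Bezout coefficients $\sigma, \tau \in \F[y]$ with $\sigma G_0 + \tau H_0 = 1$, $\deg \sigma < d_h$, $\deg \tau < d_g$. The unique solution of the recursion with $\deg_y H_i < d_h$ (which then forces $\deg_y G_i < d_g$) is
\[
    H_i = (\sigma R_i) \bmod H_0, \qquad G_i = (R_i - G_0 H_i) / H_0.
\]
The monic normalization guarantees that the true $G_i, H_i$ satisfy these strict degree bounds for $i \geq 1$: the leading $y^{d_g}$ coefficient of $g$ is a nonzero constant and so lies entirely in $G_0$. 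Thus the formulas recover the true homogeneous components.

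Finally, I turn to the circuit-size estimate. By Strassen's classical homogenization, each $F_i$ is computable by a circuit of size $\poly(s, d)$ from the given circuit for $f$. The fixed-divisor univariate reduction $P \mapsto P \bmod H_0$ is implemented by standard $y$-coefficient extraction (e.g., Lagrange interpolation), fixed $\F$-linear combinations, and reassembly, at cost $\poly(s, d)$ per application. Hence, given circuits for the previously computed $G_j, H_j$ (for $j < i$) and $F_i$, I obtain circuits for $G_i$ and $H_i$ at an additive $\poly(s, d)$ overhead; summing across the $O(d)$ rounds and assembling $g = \sum_{i=0}^{d_g} G_i$ and $h = \sum_{i=0}^{d_h} H_i$ yields the $\poly(s, n, d)$ bound. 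The main technical obstacle is precisely this bookkeeping—ensuring that the Newton iteration contributes additively rather than multiplicatively across rounds—which is handled by aggressively sharing the subcircuits for each previously computed $G_j$, $H_j$, and $F_i$.
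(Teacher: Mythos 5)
Your proposal is correct and takes essentially the same approach as the paper: both normalize so that $f$, $g$, $h$ are monic in $y$ with $g(\mathbf{0},y)$ and $h(\mathbf{0},y)$ coprime, and then lift the coefficients of $g$ and $h$ order by order along the $\langle \vx\rangle$-adic filtration, with unique solvability at each step guaranteed by the nonvanishing resultant. The only notable difference is presentational: the paper packages the lifting step as a multivariate Newton iteration and observes that the Jacobian of the coefficient-matching system is the Sylvester matrix, whereas you derive the same linear recursion directly from the $\vx$-degree expansion of $f = gh$ and solve it explicitly via Bezout coefficients.
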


\begin{lemma}\label{clm:purepower}
Let $f \in \F[x_1, x_2, \ldots, x_{n-1}, y]$ be an $n$-variate degree $d$ polynomial which can be computed by an arithmetic circuit of size at most $s$.  If there is a polynomial $g$ and an integer $e$ such that $f = g^e$, then $g$ has a circuit of size at most $\poly(s, n, d)$. 
\end{lemma}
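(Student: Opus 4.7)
The plan is to reduce the pure-power case directly to the coprime-factors case of \autoref{clm:distinct factors} via a simple additive perturbation of $f$. Assume $e \geq 2$ (otherwise $g = f$ and there is nothing to prove) and, for now, that $\mathrm{char}(\F) \nmid e$. Set
\[
\tilde f := f - 1.
\]
A circuit for $\tilde f$ of size at most $s + 1$ is immediate, so it suffices to extract a small circuit for $g - 1$ from $\tilde f$; the desired circuit for $g$ is then obtained as $(g-1) + 1$.

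Using the identity $a^e - b^e = (a-b)\sum_{i=0}^{e-1} a^{i} b^{e-1-i}$ with $a = g$ and $b = 1$, we get the factorization
\[
\tilde f \;=\; g^e - 1 \;=\; (g-1)\cdot q(g), \qquad q(Y) := 1 + Y + Y^2 + \cdots + Y^{e-1}.
\]
Both factors have positive total degree: $\deg(g-1) = d/e \ge 1$ (since $g$ is nonconstant, else the lemma is trivial) and $\deg q(g) = (e-1)d/e \ge 1$ (since $e \ge 2$). The key claim, verified below, is that $g-1$ and $q(g)$ are coprime in $\F[x_1,\ldots,x_{n-1},y]$. Given this, applying \autoref{clm:distinct factors} to the factorization $\tilde f = (g-1)\cdot q(g)$ (whose input circuit has size $O(s)$) produces an arithmetic circuit for $g - 1$ of size $\poly(s, n, d)$, which concludes the proof.

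For the coprimality, the computation is that $q(1) = e$, so reducing $q(g)$ modulo $g - 1$ in the polynomial ring gives the scalar $e$; that is, there exists $r \in \F[x_1,\ldots,x_{n-1},y]$ with $q(g) - e = (g-1)\cdot r$. If some irreducible $p$ divided both $g-1$ and $q(g)$, it would therefore divide the constant $e \in \F^\times$ (nonzero by the characteristic assumption), which is impossible. Hence $\gcd(g-1,q(g)) = 1$.

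\textbf{Main obstacle.} The step I expect to require the most care is the positive-characteristic case when $\mathrm{char}(\F) \mid e$, since then $e = 0$ in $\F$ and the coprimality argument collapses --- for instance, in characteristic $2$ with $e = 2$ the identity degenerates to $g^2 - 1 = (g-1)^2$. The standard workaround is to write $e = p^a m$ with $\gcd(m, p) = 1$: the trick above still extracts the $m$-th root (yielding a circuit for $g^{p^a}$), reducing to the task of extracting a $p^a$-th root. This last step is handled via the Frobenius endomorphism (over a perfect field, a polynomial is a $p$-th power iff every monomial exponent of it is divisible by $p$, and one then takes $p$-th roots of its coefficients), producing a $p$-th-root circuit of size $O(s)$ per level.
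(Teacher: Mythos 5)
Your proof is correct, and it rests on the same core idea as the paper: apply the identity $a^e - b^e = (a-b)\sum_{i=0}^{e-1} a^i b^{e-1-i}$ to produce a factorization with coprime factors, then invoke Lemma~\ref{clm:distinct factors}. The difference is the choice of perturbation. The paper introduces a fresh variable $z$ and factors $z^e - f = (z-g)\bigl(z^{e-1} + z^{e-2}g + \cdots + g^{e-1}\bigr)$, extracting $g$ from the linear factor $z - g$ and checking coprimality by noting that the second factor evaluated at $z = g$ is $e\,g^{e-1} \not\equiv 0$. You instead shift by a constant and factor $f - 1 = (g-1)\bigl(1 + g + \cdots + g^{e-1}\bigr)$, which stays inside $\F[x_1, \ldots, x_{n-1}, y]$ and gives an arguably slicker coprimality check: reducing $q(g)$ modulo $g-1$ yields the nonzero scalar $e$, so the two factors share no irreducible divisor. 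Both routes are equally short and both ultimately hinge on $e$ being a unit in $\F$; yours has the small cosmetic advantage of not enlarging the variable set. Your aside on the case $\mathrm{char}(\F) \mid e$ via Frobenius is a reasonable observation but is not needed here, since the paper fixes the characteristic to be zero or sufficiently large from the outset.
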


The proof of~\autoref{clm:distinct factors} and~\autoref{clm:purepower} will be provided in~\autoref{sec:proof claim 1} and~\autoref{sec:proof claim 2} respectively. 

\iffalse
Before that, we are going to briefly mention some related works in~\autoref{sec:related} and provide some backgrounds in~\autoref{sec:prelims}.

\subsection{Related works}\label{sec:related}

\paragraph{Previous proofs for~\autoref{thm:main}}
The idea of Kaltofen's original proof for~\autoref{thm:main} consists of two steps. First, reducing a multivariate polynomial to a \textit{bivariate polynomial}. Usually this step can be done efficiently with the help of randomness. The second step factors a bivariate polynomial using the idea of Hensel's lifting (or Newton's iteration).

\Cnote{Maybe more details on the previous proofs?}

\paragraph{Closure results for other complexity classes}
B\"{u}rgisser~\cite{bur04} showed that the class of polynomials that can be arbitrarily approximated by polynomial size arithmetic circuits\footnote{Note that there's no degree restriction here.} is closed under factoring. Dutta, Saxena, and Sinhababu~\cite{DSS18} showed that quasi-polynomial size arithmetic formulas and arithmetic branching programs  are closed under factoring. The authors of this notes~\cite{CKS18A} showed that $\VNP$ is closed under factoring. Dvir, Shpilka, and Yehudayoff~\cite{DSY09}, Oliveira~\cite{O16}, and the authors of this notes~\cite{CKS18B} also showed closure results for bounded depth arithmetic circuits with certain degree constraints.

On the other hand, the class of sparse polynomials does not have the closure property. Specifically, there exists polynomial of whose factor having a quasi-polynomial blow-up in the sparsity~\cite{vzGK85}.
\fi
\section{Notations and Preliminaries}\label{sec:prelims}
We follow the following notation. 
\begin{itemize}
\item Throughout the paper, $\F$ is a field of characteristic zero or sufficiently large.
\item For a positive integers $n$, $[n]$ denotes the set $\{0, 1, 2, \ldots, n-1\}$.
\item We use boldface letters to denote ordered tuples of objects. For instance, $\vx = (x_1, x_2, \ldots, x_n)$, or $\vg = (g_0, g_1, \ldots, g_{d_1-1})$. The length of these tuples and the precise indexing is defined before the specific notation is invoked. The sum of two such tuples of the same length is their coordinate wise sum.
\item We say that a function $\Psi$ of parameters $n_1, n_2, \ldots, n_t$ taking values in $\Z^{+}$ is $\poly(n_1, n_2, \ldots, n_t)$ if there is a polynomial $\Phi$ such that for all sufficiently large values of $n_1, n_2, \ldots, n_t$, $\Psi(n_1, n_2, \ldots, n_t)$ is upper bounded by $\Phi(n_1, n_2, \ldots, n_t)$. 
%\item For an integer $k$ and a polynomial $f$, $\hom_{k}(f)$ is equal to the homogeneous component of $f$ of degree equal to $k$, while $\hom_{\leq k}(f)$  denotes $\hom_{\leq k}(f) := \sum_{i = 0}^k \hom_i(f)$. We note that $\hom_{\leq k}(f(\vx)) = (f \mod \langle\vx \rangle^{k+1})$. 
\end{itemize}

\subsection{Arithmetic Circuits}
Arithmetic circuits (also historically referred to as straight line programs) provide a succinct and compact representation for multivariate polynomials. Formally, they are defined as follows.
\begin{definition}[Arithmetic Circuit]
Let $\F$ be a field and $\vx = (x_1, x_2, \ldots, x_n)$ be variables. An arithmetic circuit $C$ over $\F$ and $\vx$ is a directed acyclic graph where the vertices are called gates. Every gate with in-degree zero is an input gate and is labeled by a single variable from $\vx$ or a field element from $\F$. The other gates are labeled by $+$ (sum gates) or $\times$ (product gates). The gates with out-degree zero are called output gates.

Each gate in the circuit $C$ computes a polynomial in $\F[\vx]$ in a natural and inductive way. For an input gate $g$, the polynomial it computes is the corresponding variable or field element. A $+$ (resp. $\times$) gate $g$ computes the sum (resp. products) of the polynomials computed at the gates which have a directed edge to $g$. 
%For gate $g$ with label $+$ (resp. $\times$) and incoming gates $g_1,\dots,g_t$, it computes the polynomial $C_g=\sum_{i\in[t]}C_{g_i}$ (resp. $C_g=\prod_{i\in[t]}C_{g_i}$). We say that an arithmetic circuit $C$ computes polynomials $f_1,f_2,\dots,f_t\in\F[\vx]$ if the output gates $g_1,g_2,\dots,g_t$ of $C$ compute these polynomials.
The size of an arithmetic circuit $C$ is defined as the number of edges in $C$. 
\end{definition}
The following lemma structural lemma about arithmetic circuits will be useful for our proof. 
\begin{lemma}[Homogenization]\label{lem:homogenization}
Let $C$ be a multi-output arithmetic circuit of size $s$ with outputs $f_1, f_2, \ldots, f_t$.  Then, for every $k$, there is a homogeneous circuit of size at most $O(k^2s)$ which outputs  the homogeneous components of degree at most $k$ of $f_1, f_2, \ldots, f_t$.
\end{lemma}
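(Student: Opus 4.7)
The plan is to use the standard Strassen-style homogenization construction. For each gate $v$ in the original circuit $C$ computing a polynomial $p_v$, I will introduce $k+1$ new gates $v_0, v_1, \ldots, v_k$ in the new circuit $C'$, where the intent is that $v_i$ computes the degree-$i$ homogeneous component of $p_v$. The correctness of the construction will follow by a straightforward induction on the depth of the gate in $C$.

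I would define the gates of $C'$ by cases on the type of the corresponding gate in $C$. For an input gate labeled by a variable $x_j$, set $v_1 := x_j$ and $v_i := 0$ for $i \neq 1$; for an input gate labeled by a field constant $c$, set $v_0 := c$ and $v_i := 0$ for $i \neq 0$. For a sum gate $v = u + w$ in $C$, define $v_i := u_i + w_i$ for each $i \in \{0, 1, \ldots, k\}$. For a product gate $v = u \cdot w$, define
\[
v_i \;:=\; \sum_{j=0}^{i} u_j \cdot w_{i-j} \qquad \text{for each } i \in \{0, 1, \ldots, k\},
\]
which implements the convolution that isolates the degree-$i$ component of a product. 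The outputs of $C'$ are the gates $f_{\ell, i}$ for $\ell \in \{1, \ldots, t\}$ and $i \in \{0, 1, \ldots, k\}$. Since each $v_i$ is expressed as a sum of products of homogeneous polynomials of compatible degrees, an easy induction confirms that each $v_i$ computes precisely the degree-$i$ homogeneous component of $p_v$, and in particular $C'$ is a homogeneous circuit.

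For the size bound, I count the edges introduced per gate of $C$. A sum gate of $C$ contributes $k+1$ new sum gates to $C'$, each of fan-in two, hence $O(k)$ edges. A product gate of $C$ contributes, for each $i \in \{0, \ldots, k\}$, exactly $i+1$ new product gates and $i$ new addition gates, for a total of $\sum_{i=0}^{k} O(i) = O(k^2)$ edges. Input gates contribute $O(k)$ edges. Summing over all gates of $C$, the total size of $C'$ is at most $O(k^2 \cdot s)$, as claimed.

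There is no real obstacle here; the only minor bookkeeping subtlety is making sure that the substitution of the zero polynomial for out-of-range degrees in the convolution does not actually cost any gates (we simply omit the corresponding summands), and that the count of edges in the convolution gadgets is indeed $O(i)$ per degree level rather than something larger. Both are routine.
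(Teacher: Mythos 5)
The paper does not include a proof of this lemma; it defers to standard references (the Shpilka--Yehudayoff survey). Your proposal is exactly the classical Strassen-style homogenization construction that those references present, the bookkeeping of the convolution gadget sizes and the $O(k^2 s)$ edge count is correct (implicitly assuming bounded fan-in products, which costs only a constant factor to arrange), and so your proof is a valid filling-in of the argument the paper outsources.
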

We refer the reader to any standard resource (such as the survey by Shpilka and Yehudayoff~\cite{SY10}) for a proof for~\autoref{lem:homogenization} and for a general overview of arithmetic circuit complexity.
\subsection{Multivariate Taylor's Expansion}
We use the following lemma  which is an easy consequence of the classical multivariate Taylor expansion for polynomials. 
\begin{lemma}[Truncated Multivariate Taylor's Expansion]\label{lem:multivariate taylor}
Let $f\in\F[\vx]$ and $\va\in\F^n$, we have
$$
f(\vx + \va) \equiv f(\va) + \sum_{i=1}^m\frac{\partial f}{\partial x_i}(\va)\cdot x_i \mod \langle \vx \rangle^{2} \, . 
$$
\end{lemma}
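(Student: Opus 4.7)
The plan is to verify the congruence by $\F$-linearity in $f$, reducing to the case of a single monomial. Substitution of $\vx + \va$ for $\vx$ is a ring homomorphism (hence $\F$-linear in $f$), and evaluation at $\va$ and each partial derivative $\partial/\partial x_i$ are also $\F$-linear operators. So both sides of the claimed congruence depend $\F$-linearly on $f$, and it is enough to verify the identity on the monomial basis $\{\vx^{\ve}\}_{\ve \in \N^n}$.

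For a single monomial $\vx^{\ve} = \prod_{i=1}^n x_i^{e_i}$, I would expand
\[
\prod_{i=1}^n (x_i + a_i)^{e_i}
\]
by the binomial theorem in each factor and group the resulting terms by their total degree in $\vx$. The degree-zero contribution comes from picking $a_i^{e_i}$ in every factor and equals $\va^{\ve}$, matching $f(\va)$. The degree-one contribution arises by picking the term $e_i \cdot x_i \cdot a_i^{e_i-1}$ from exactly one factor $i$ and $a_j^{e_j}$ from all other factors $j$; summing these over $i$ yields
\[
\sum_{i=1}^n e_i\, a_i^{e_i - 1} \prod_{j \ne i} a_j^{e_j}\cdot x_i \;=\; \sum_{i=1}^n \left.\frac{\partial\, \vx^{\ve}}{\partial x_i}\right|_{\vx=\va}\cdot x_i,
\]
which is exactly the right-hand side for $f = \vx^{\ve}$. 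Every remaining term in the expansion involves at least two $x_i$-factors and so lies in $\langle \vx \rangle^2$. Assembling the three pieces gives the congruence for the monomial, and linearity then extends it to arbitrary $f \in \F[\vx]$.

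I do not expect any real obstacle, as this lemma is simply the degree-one truncation of the usual multivariate Taylor expansion; the only minor bookkeeping is the case $e_i = 0$, where the $i$-th factor is trivially $1$ and the corresponding partial derivative vanishes, so both sides consistently omit an $x_i$ contribution. Note that the statement as written uses $\sum_{i=1}^m$; I would read this as $\sum_{i=1}^n$ (so that it matches the number of variables in $\vx$), which is what the argument above establishes.
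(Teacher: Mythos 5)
Your proof is correct and complete. The paper actually states this lemma without proof, remarking only that it is ``an easy consequence of the classical multivariate Taylor expansion for polynomials,'' so there is no in-paper argument to compare against; your reduction to a single monomial via $\F$-linearity of substitution, evaluation, and partial differentiation, followed by the binomial expansion and grouping by degree in $\vx$, is a clean and standard way to fill this in. You are also right that the $\sum_{i=1}^{m}$ in the statement should read $\sum_{i=1}^{n}$.
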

In the proof for~\autoref{thm:main}, we need a variant of this lemma where the variables are from $\F[\vx]$ instead of $\vx$. We state it as a corollary of~\autoref{lem:multivariate taylor}
\begin{corollary}\label{cor:multivariate taylor polynomials}
Let $f_1,f_2,\dots,f_m,p_1,p_2,\dots,p_m\in\F[\vx]$ where $\deg(p_i)\geq k$ for each $i\in[m]$ and $Q\in\F[z_1,z_2,\dots,z_m]$, we have
$$
Q\left(\mathbf{f}+\mathbf{p}\right) \equiv Q\left(\mathbf{f}\right) + \sum_{i\in[m]}\frac{\partial Q}{\partial z_i}\left(\mathbf{f}\right)\cdot p_i \mod \langle \vx \rangle^{k+1} \, .
$$
\end{corollary}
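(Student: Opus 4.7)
The plan is to reduce Corollary 2.4 to Lemma 2.3 by viewing the latter as a purely formal polynomial identity rather than a statement about expansion around a point in $\F^n$. Concretely, I would first argue that for any $Q \in \F[z_1, \ldots, z_m]$, the identity of Lemma 2.3 holds in $\F[\vz, \vu]$ with a fresh tuple $\vu = (u_1, \ldots, u_m)$ of auxiliary variables taking the role of the expansion point, i.e.
$$Q(\vz + \vu) \;=\; Q(\vu) + \sum_{i=1}^{m} \frac{\partial Q}{\partial z_i}(\vu)\cdot z_i + R(\vz, \vu),$$
where $R(\vz, \vu)$ is a sum of monomials whose total degree in $\vz$ is at least $2$. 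This is verified monomial-by-monomial using the binomial theorem on $(z_i + u_i)^{\alpha_i}$ for each multi-index $\alpha$, and is exactly the content of Lemma 2.3 before one specializes $\vu$ to a point; so no new ideas are needed here.

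Next I would substitute $\vz \mapsto \mathbf{p}$ and $\vu \mapsto \mathbf{f}$, both tuples in $\F[\vx]$. This is a legitimate ring homomorphism $\F[\vz, \vu] \to \F[\vx]$, so the displayed identity transports to
$$Q(\mathbf{f} + \mathbf{p}) \;=\; Q(\mathbf{f}) + \sum_{i=1}^{m} \frac{\partial Q}{\partial z_i}(\mathbf{f})\cdot p_i + R(\mathbf{p}, \mathbf{f})$$
in $\F[\vx]$. The first two summands are exactly the right-hand side claimed by Corollary 2.4, so the remaining task is to verify $R(\mathbf{p}, \mathbf{f}) \in \langle \vx \rangle^{k+1}$.

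For the remainder estimate, note that $R(\mathbf{p}, \mathbf{f})$ is, by construction, an $\F[\vx]$-linear combination (with coefficients built from entries of $\mathbf{f}$) of products of the form $p_{i_1} p_{i_2} \cdots p_{i_t}$ with $t \geq 2$. Since $\deg(p_i) \geq k$ for every $i$, i.e.\ each $p_i \in \langle \vx \rangle^{k}$, every such product lies in $\langle \vx \rangle^{2k} \subseteq \langle \vx \rangle^{k+1}$ (for $k \geq 1$, which is the only non-trivial case). Hence $R(\mathbf{p}, \mathbf{f}) \equiv 0 \pmod{\langle \vx \rangle^{k+1}}$, finishing the proof. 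I do not anticipate a genuine obstacle; the only mildly subtle step is the passage from ``expansion around a point'' to ``expansion at a formal tuple'', and this is a routine consequence of $\F[\vz, \vu]$ being a polynomial ring and substitution being a homomorphism.
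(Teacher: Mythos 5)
Your proof is correct. The paper gives no proof of this corollary (it is stated as following from Lemma~\ref{lem:multivariate taylor} without further argument), so there is nothing to compare against; your route --- establish the formal identity $Q(\vz+\vu)=Q(\vu)+\sum_i \frac{\partial Q}{\partial z_i}(\vu)\, z_i + R(\vz,\vu)$ with $R$ of $\vz$-degree at least $2$ in the polynomial ring $\F[\vz,\vu]$, apply the substitution homomorphism $\vz\mapsto \mathbf{p}$, $\vu\mapsto \mathbf{f}$, and then kill $R(\mathbf{p},\mathbf{f})$ using $p_i\in\langle\vx\rangle^k$ --- is exactly the natural one, and you are right that $k\geq 1$ is needed for $2k\geq k+1$. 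One point worth flagging explicitly: the hypothesis as literally written, ``$\deg(p_i)\geq k$,'' is strictly weaker than the condition $p_i\in\langle\vx\rangle^k$ that your proof (and any correct proof) actually requires --- for instance $p_i = 1 + x_1^k$ has degree $\geq k$ yet the conclusion fails for it already at $Q(z)=z^2$. Your reading is the intended one and is the one in force when the corollary is used in Lemma~\ref{lem:newton}, where each $p_i$ is homogeneous of degree exactly $k$.
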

%\Cnote{Shall we mention that Newton iteration is simple doing Taylor's expansion?}
\subsection{Jacobian Matrix}
The Jacobian matrix is a matrix that contains the partial derivatives of a vector of multivariate functions.
\begin{definition}[Jacobian Matrix]\label{def:jacobian}
Let $f_1,f_2,\dots,f_m\in\F[\vx]$, the Jacobian matrix of $\mathbf{f}$ with respect to $\vx$ is a $m\times n$ matrix denoted as $\mathsf{Jacobian}_\vx(\mathbf{f})$ where the $(i,j)$ entry is defined as $\frac{\partial f_i}{\partial x_j}$ for each $i\in[m]$ and $j\in[n]$.
\end{definition}
%Jacobians are well studied objects in algebraic complexity with various applications in questions on pseudorandomness and algebraic independence. For our application, Jacobians will be used in a fairly intuitive way
%\Cnote{Shall we restate multivariate taylor for vector function?}

\subsection{GCD and Resultant}
For any two polynomials $g,h\in\F[\vx]$, we can define their greatest common divisor (GCD) as follows.
\begin{definition}[GCD]\label{def:gcd}
Let $\F$ be a field and $g,h\in\F[\vx]$. The greatest common divisor (GCD) of $g$ and $h$ is $\mathsf{GCD}(g,h)=f$ if $f$ divides both $g$ and $h$, and for any $\tilde{f}\in \F[\vx]$ that divides both $g$ and $h$, we have $\tilde{f}$ dividing $f$.
For any $g,h\in\F[\vx][y]$, we define $\mathsf{GCD}_y(g,h)$ to be the greatest common divisor $(\mathsf{GCD})$ of $g$ and $h$ with respect to $y$.
\end{definition}

It turns out that there is a clean and useful mathematical condition to check whether the $\mathsf{GCD}$ of two polynomials is non-constant using \textit{resultant}.

\begin{definition}[Resultant]\label{def:resultant}
Let $g,h\in\F[\vx][y]$ and $d_1,d_2\in\N$ such that $g=\sum_{i=0}^{d_1}g_iy^i$ and $h=\sum_{j=0}^{d_2}h_jy^j$ for some $g_i,h_j\in\F[\vx]$. The resultant $R_y(g,h)$ is the determinant of a following $(d_1+d_2) \times (d_1+d_2)$ matrix, called the Sylvester matrix $S(g,h)$.
\[
S(g,h) = \begin{pmatrix}
g_0&0&\cdots&0&h_0&0&\cdots&0\\
g_1&g_0&\cdots&0&h_1&h_0&\cdots&0\\
g_2&g_1&&\vdots&\vdots&h_1&&\vdots\\
\vdots&\vdots&\ddots&g_0&h_{d_2}&\vdots&\ddots&0\\
g_{d_1}&g_{d_1-1}&&g_1&0&h_{d_2}&&h_0\\
0&g_{d_1}&&g_2&0&0&&h_1\\
\vdots&\vdots&&\vdots&\vdots&\vdots&&\vdots\\
0&0&\cdots&g_{d_1}&0&0&\cdots&h_{d_2}
\end{pmatrix} \, .
\]
Specifically, for $i \in \{1, 2, \ldots, d_2\}$, the $i^{th}$ column of $S$ is equal to $(0, \ldots, 0, g_{d_1}, g_{d_1-1}, \ldots, g_1, g_0, 0, \ldots, 0)$, where there are $i-1$ zeroes in the prefix. For $j \in \{d_2+1, \ldots, d_1+d_2\}$, t he $j^{th}$ column of $S$ equals $(0, \ldots, 0, h_{d_2}, h_{d_2-1}, \ldots, h_1, h_0, 0, \ldots, 0)$, where there are $j - d_2-1$ zeroes in the prefix.
\end{definition}
%\Cnote{I defined with non-monic polynomial and the last paragraph can be deleted if it occupy too much space.}
The following lemma shows that $R_y(g,h)=0$ if and only if $g$ and $h$ have a common non-constant factor.
\begin{lemma}[Capturing the GCD via the Resultant]\label{lem:gcd resultant}
Let $g,h\in\F[\vx][y]$ and $d_1,d_2\in\N$ such that $g=\sum_{i=0}^{d_1}g_iy^i$ and $h=\sum_{j=0}^{d_2}h_jy^j$ for some $g_i,h_j\in\F[\vx]$. Then, $R_y(g,h)=0$ if and only if $\mathsf{GCD}_y(g,h)$ has degree at least $1$ in $y$. 
\end{lemma}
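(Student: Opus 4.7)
The plan is to view $R_y(g,h)$ as the determinant of a Sylvester-type linear map and reduce the vanishing question to the standard GCD theory over the principal ideal domain $\F(\vx)[y]$, where $\F(\vx)$ denotes the fraction field of $\F[\vx]$. Throughout, I assume the meaningful case in which $g_{d_1}$ and $h_{d_2}$ are nonzero in $\F[\vx]$, so that $\deg_y g = d_1$ and $\deg_y h = d_2$.

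The first step is to observe that, in the basis $1, y, \ldots, y^{d_1+d_2-1}$ of polynomials of $y$-degree less than $d_1 + d_2$, the columns of $S(g,h)$ from \autoref{def:resultant} are precisely the coordinate vectors of $g, yg, \ldots, y^{d_2-1}g$ followed by $h, yh, \ldots, y^{d_1-1}h$. Since $\F[\vx]$ is an integral domain, $R_y(g,h) = \det S(g,h)$ vanishes if and only if these $d_1 + d_2$ polynomials are linearly dependent over $\F(\vx)$, which, after reading off the coefficients of a dependence, is equivalent to the existence of $A, B \in \F(\vx)[y]$, not both zero, with $\deg_y A < d_2$ and $\deg_y B < d_1$, satisfying $A g + B h = 0$.

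The second step is to show, inside the PID $\F(\vx)[y]$, that such a nonzero pair $(A, B)$ exists if and only if $g$ and $h$ share a common factor of positive $y$-degree. If $p := \mathsf{GCD}(g, h) \in \F(\vx)[y]$ has positive $y$-degree, then writing $g = p\,g'$ and $h = p\,h'$ with $\deg_y g' < d_1$ and $\deg_y h' < d_2$ yields $h' g - g' h = 0$, so $(A, B) = (h', -g')$ works. Conversely, if $\mathsf{GCD}(g, h) = 1$ in $\F(\vx)[y]$ and $A g + B h = 0$, then $g \mid B h$ forces $g \mid B$ by coprimality, and the bound $\deg_y B < d_1 = \deg_y g$ then forces $B = 0$, whence $A = 0$ as well by the integral domain property.

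The third and final step is to transfer the conclusion between $\F(\vx)[y]$ and $\F[\vx][y]$. One direction is immediate: any common factor of positive $y$-degree in $\F[\vx][y]$ is already one in $\F(\vx)[y]$. For the reverse, given a common factor $p \in \F(\vx)[y]$ of positive $y$-degree, I would clear denominators and divide out by the content of the numerator to obtain a primitive $\tilde p \in \F[\vx][y]$ of the same $y$-degree, and then invoke Gauss's lemma---the product of primitive polynomials over the UFD $\F[\vx]$ is primitive---to conclude that $\tilde p$ actually divides both $g$ and $h$ in $\F[\vx][y]$, whence $\mathsf{GCD}_y(g, h)$ has positive $y$-degree. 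This Gauss's lemma transfer is the only nontrivial algebraic input, and I expect it to be the main (though fairly mild) obstacle in writing up the proof cleanly.
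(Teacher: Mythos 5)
The paper does not actually include a proof of \autoref{lem:gcd resultant}; it explicitly defers to external references (the lecture notes of Sudan). Your argument is the standard textbook proof and it is correct: you interpret the columns of the Sylvester matrix as the coefficient vectors of $g, yg, \ldots, y^{d_2-1}g, h, yh, \ldots, y^{d_1-1}h$, so that $\det S(g,h) = 0$ over the integral domain $\F[\vx]$ is equivalent to a nontrivial $\F(\vx)$-linear dependence, i.e.\ to the existence of $A, B \in \F(\vx)[y]$ with $\deg_y A < d_2$, $\deg_y B < d_1$, not both zero, and $Ag + Bh = 0$; you then show in the PID $\F(\vx)[y]$ that such a pair exists exactly when $\mathsf{GCD}(g,h)$ has positive $y$-degree, and finally transfer between $\F(\vx)[y]$ and $\F[\vx][y]$ via Gauss's lemma. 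Two small remarks on hygiene rather than substance: first, the standing hypothesis that $g_{d_1}, h_{d_2} \neq 0$ is genuinely needed (if both leading coefficients vanish, $S(g,h)$ has a zero row and the resultant vanishes even for coprime $g, h$), and it is worth noting that the paper's use of the lemma arranges this by a linear change of variables making both leading coefficients equal to $1$; second, the Gauss's lemma step in your third paragraph, while standard, is where the UFD structure of $\F[\vx]$ enters, and you correctly flag it as the only nontrivial algebraic ingredient. Since the paper gives no proof of its own, there is no alternative route to compare against; your write-up supplies exactly the missing argument.
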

To keep this note short, we refer the reader to any standard resource (such as the lecture notes by Sudan~\cite{sud98}) for a proof for~\autoref{lem:gcd resultant}.
%\subsection{Homogenization}
%For an integer $k$ and a polynomial $f$, $\hom_{k}(f)$ is equal to the homogeneous component of $f$ of degree equal to $k$, while $\hom_{\leq k}(f)$  denotes $\hom_{\leq k}(f) := \sum_{i = 0}^k \hom_i(f)$. We note that $\hom_{\leq k}(f(\vx)) = (f \mod \langle\vx \rangle^{k+1})$. 

\section{Proof of Lemma~\ref{clm:distinct factors}}\label{sec:proof claim 1}
We have polynomials $f$, $g$ and $h$ such that $f = g\cdot h$ and $f$ has an arithmetic circuit of size at most $s$. The goal is to show  that $g$ and $h$ have circuits of size at most $\poly(s,n,d)$. Let $d_1, d_2$ be the degrees of $g$ and $h$ respectively and let $d_1 \leq d_2$.If $g$ and $h$ are variable disjoint then we can obtain a circuit for $g$ by just setting the variables in $h$ to random values such that $h$ does not vanish and then scaling by an appropriate field constant. So, we focus on the interesting case when $g$ and $h$ share at least one common variables. Let $y$ be such a variable.  By taking a random (from a large enough grid) $\va \in \F^n$ and replacing $x_i$ by $x_i + a_iy$, we can guarantee that the coefficient of $y^d$ in $f$, $y^{d_1}$ in $g$ and $y^{d_2}$ in $h$ are all non-zero field elements. Without loss of generality, we assume that these constants are all $1$ (or else we scale everything by a constant). In the rest of  this section, we view the identity $f = g\cdot h$ as an identity in $\F[\vx][y]$. Note that at the end of the above transformation, $\mathsf{GCD}(f,g)$ continues to be $1$ when viewing them as univariates in $y$. We know that $f$ has a small circuit, and the goal is to show that $g$ and $h$ have small circuits. 

Let $f_0, f_1, \ldots, f_{d-1}, g_0, g_1, \ldots, g_{d_1-1}, h_0, h_1, \ldots, h_{d_2-1} \in \F[\vx]$ be polynomials such that 
\[
f := y^d + \sum_{i = 0}^{d-1} f_i y^i \, ,
\]
\[
g := y^{d_1} + \sum_{i = 0}^{d_1-1} g_i y^i \, 
\]
and
\[
h := y^{d_2} + \sum_{i = 0}^{d_2-1} h_i y^i \, .
\]
%Let $f,g,h\in\F[\vx,y]$ where $\vx=(x_1,\dots,x_n)$ such that $f=g\cdot h$, $\deg_{y}(g)=d_1, \deg_{y}(h)=d_2$, and $gcd(g,h)=1$. Let $D=d_1+d_2=\deg_{y}(f)$. Write $f,g,h$ in terms of $y$ as follows.
%$$
%f(\vx,y) = \sum_{i=0}^{D}f_i(\vx)y^i,\ g(\vx,y) = \sum_{j=0}^{d_1}g_j(\vx)y^j,\ h(\vx,y) = \sum_{k=0}^{d_2}h_k(\vx)y^k.
%$$
%\Cnote{We should probably assume wlog $d_1\leq d_2$?}
Now, comparing the coefficients of $y^i$ on both sides in the equality $f=g\cdot h$ gives us a system of polynomial equations in $g_0, g_1, \ldots, g_{d_1-1}, h_0, h_1, \ldots, h_{d_2-1}$ as follows.
\begin{align*}
f_0 &= g_0\cdot h_0\\
f_1 &= g_0\cdot h_1 + g_1\cdot h_0\\
\vdots\\
f_i &= \sum_{j=0}^{\min\{i,d_1\}}g_j\cdot h_{i-j}\\
\vdots\\
f_{d-1} &= g_{d_1-1} + h_{d_2-1}.
\end{align*}
Let $\vu = (u_0, u_1, \ldots, u_{d_1-1})$ and $\vw = (w_0, w_1, \ldots, w_{d_2-1})$ be new sets of variables.  For $\ell \in [d]$  define the polynomials 
$$Q_{\ell}(\vu, \vw) := \sum_{j = 0}^{\min\{\ell, d_1-1\}} u_{j}\cdot w_{\ell - j} - f_{\ell}\, .$$
We view $Q_{\ell}$ as a polynomial in $\vu, \vw$ with coefficients coming from the ring $\F[\vx]$. In this sense, $(\vg, \vh) = (g_0, g_1, \ldots, g_{d_1-1}, h_0, h_1, \ldots, h_{d_2-1})$ is a common zero of $Q_0, Q_1, \ldots, Q_{d-1}$. Our goal is to essentially \emph{solve} the system of equations given by $\{Q_{\ell}(\vu, \vv) = 0 : \ell \in [d]\}$ to recover circuits for each $g_i$ and $h_i$ and prove an upper bound on their size. Note that this would not be an efficient  algorithmic procedure, but we will be able to argue about the circuit complexity of the solution. To this end, we first observe some elementary properties of this system of polynomial equations. 
\begin{observation}\label{obs:ckts for Q}
For every $\ell \in [d-1]$, $Q_{\ell}$ can be computed by a circuit of size at most $O(sd)$. 
\end{observation}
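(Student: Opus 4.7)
The plan is to split $Q_\ell(\vu, \vw)$ into two additive pieces and bound each separately: the ``new variable'' part $S_\ell := \sum_{j=0}^{\min\{\ell, d_1-1\}} u_j \cdot w_{\ell-j}$, and the coefficient polynomial $f_\ell \in \F[\vx]$. The first part is a sum of at most $d$ products of two distinct input variables, so it clearly admits a circuit of size $O(d)$. The entire task therefore reduces to producing a small circuit for $f_\ell$, namely the coefficient of $y^\ell$ in $f(\vx, y)$, viewed as a univariate polynomial in $y$ over the ring $\F[\vx]$.

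For $f_\ell$, I would use standard univariate polynomial interpolation in the variable $y$. Since $f$ has degree exactly $d$ in $y$, fix $d+1$ distinct field elements $\alpha_0, \ldots, \alpha_d \in \F$ (which exist because $\F$ has characteristic zero or is sufficiently large, per the paper's standing assumption). Substituting $y \mapsto \alpha_i$ into the given size-$s$ circuit for $f$ produces a circuit of size at most $s$ computing $f(\vx, \alpha_i) \in \F[\vx]$. Running all $d+1$ such substitutions in parallel yields a multi-output circuit of total size at most $(d+1)\cdot s$ whose outputs are $f(\vx, \alpha_0), \ldots, f(\vx, \alpha_d)$. The $(d+1) \times (d+1)$ Vandermonde matrix with entries $\alpha_i^j$ is invertible, so each $f_\ell$ is an explicit $\F$-linear combination of these outputs; realising this linear combination adds $O(d)$ more edges. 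Hence each $f_\ell$ is computable by a circuit of size $(d+1)s + O(d) = O(sd)$.

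Putting the two parts together, $Q_\ell = S_\ell - f_\ell$ has a circuit of size $O(d) + O(sd) + O(1) = O(sd)$, as claimed. I do not anticipate any essential obstacle in executing this plan: it is a clean application of parallel evaluation plus Vandermonde interpolation, and the only mild prerequisite is the availability of enough distinct scalars in $\F$, which is ensured by the hypothesis on the field. Note that the argument also automatically handles all $d$ of the $Q_\ell$'s simultaneously at size $O(sd)$ as a multi-output circuit, since the $(d+1)s$ cost of evaluating $f$ at the $\alpha_i$'s is shared across all $\ell$; for the stated per-$\ell$ bound we only need the single-output version.
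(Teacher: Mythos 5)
Your proposal is correct, and the high-level decomposition $Q_\ell = S_\ell - f_\ell$ with a trivial $O(d)$ bound for $S_\ell$ matches what the paper implicitly does. The difference lies in how you extract the coefficient $f_\ell$ of $y^\ell$ from the size-$s$ circuit for $f$. The paper invokes the homogenization lemma (\autoref{lem:homogenization}); you instead use interpolation over $d+1$ distinct scalars $\alpha_0,\dots,\alpha_d$ and inversion of the Vandermonde matrix. Both are standard techniques for slicing a circuit by $y$-degree, but your interpolation route directly produces a size $(d+1)s + O(d) = O(sd)$ circuit for a single $f_\ell$, whereas a literal application of \autoref{lem:homogenization} as stated (with $k=d$) gives a bound of $O(d^2 s)$, not $O(sd)$, so your argument is actually a cleaner fit for the bound that the observation claims. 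The only hypothesis your route adds is the availability of $d+1$ distinct field elements, which the paper's standing assumption on $\F$ supplies. Neither the discrepancy in the paper's constant nor the choice of technique affects anything downstream, since all that is used later is a $\poly(s,n,d)$ bound.
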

\begin{proof}
Since $f$ has  a circuit of size at most $s$, and has degree $d$, each $f_i$ can be computed by a circuit of size at $O(sd)$ by an easy application of~\autoref{lem:homogenization}. This immediately gives a circuit of this size for each $Q_{\ell}$. 
\end{proof}

\begin{lemma}\label{lem:jacobian}
Let ${\cal J}(\vu, \vv) := \mathsf{Jacobian}_{\mathbf{u}, \mathbf{w}}(Q_0, Q_1, \ldots, Q_{d-1})$ be the Jacobian of $Q_0, Q_1, \ldots, Q_{d-1}$. If $\mathsf{GCD}_y(g, h) = 1$, then ${\cal J}(\vg, \vh)$ is a non-singular matrix. 
%, i.e. $J$ is a $d \times d$ matrix, whose rows are labeled by $Q_0, Q_1, \ldots, Q_{d-1}$ and whose columns are labeled by the variables $\mathbf{g}, \mathbf{h}$, and any entry is the derivative of the polynomial labeling the row with respect to the polynomial labeling the column. Then, the determinant of ${\cal J}$ evaluated at $\mathbf{0}$ is non-zero. 
\end{lemma}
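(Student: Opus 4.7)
The plan is to identify the evaluated Jacobian $\mathcal{J}(\vg, \vh)$ with a column-permutation of the Sylvester matrix $S(g, h)$ from~\autoref{def:resultant}, so that $\det \mathcal{J}(\vg, \vh) = \pm R_y(g, h)$, and then conclude by~\autoref{lem:gcd resultant}.

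The first step is a direct partial derivative calculation. Writing $u(y) := y^{d_1} + \sum_i u_i y^i$ and $w(y) := y^{d_2} + \sum_j w_j y^j$, the polynomial $Q_\ell$ is (up to the constant $-f_\ell$) the coefficient of $y^\ell$ in the product $u(y) \cdot w(y)$, which is bilinear in $(\vu, \vw)$. Hence $\partial Q_\ell / \partial u_i$ equals the coefficient of $y^\ell$ in $y^i \cdot w(y)$, and $\partial Q_\ell / \partial w_j$ equals the coefficient of $y^\ell$ in $y^j \cdot u(y)$. Evaluating at $(\vu, \vw) = (\vg, \vh)$, the $u_i$-column of $\mathcal{J}(\vg, \vh)$ is the vector of coefficients of $y^i \cdot h(y)$, namely a downward shift by $i$ positions of the coefficient vector $(h_0, h_1, \ldots, h_{d_2 - 1}, 1)^T$; similarly, the $w_j$-column is a downward shift by $j$ positions of $(g_0, g_1, \ldots, g_{d_1 - 1}, 1)^T$.

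These shifted coefficient vectors are precisely the columns that appear in the Sylvester matrix in~\autoref{def:resultant}. The only discrepancy is the order in which the two blocks appear: $S(g, h)$ lists the $d_2$ shifts of $g$ first and then the $d_1$ shifts of $h$, whereas $\mathcal{J}(\vg, \vh)$ places the $u$-columns (shifts of $h$) before the $w$-columns (shifts of $g$). Swapping these two blocks is a column-permutation, so $\det \mathcal{J}(\vg, \vh) = \pm \det S(g, h) = \pm R_y(g, h)$. The hypothesis $\mathsf{GCD}_y(g, h) = 1$ says the GCD has degree $0$ in $y$, so~\autoref{lem:gcd resultant} gives $R_y(g, h) \neq 0$ in $\F[\vx]$, which proves that $\mathcal{J}(\vg, \vh)$ is non-singular.

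I do not expect any real obstacle here: the computation is mechanical once the bilinearity of each $Q_\ell$ is exploited, and the match with the Sylvester matrix is immediate from the coefficient-vector description. The only point to track carefully is the convention $u_{d_1} = w_{d_2} = 1$ together with $u_k = w_k = 0$ outside the valid ranges, so that the shifts $y^i \cdot h(y)$ and $y^j \cdot g(y)$ extend to column vectors of the full length $d_1 + d_2$ in a way consistent with Definition~\ref{def:resultant}.
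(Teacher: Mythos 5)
Your proof is correct and follows essentially the same route as the paper: identify $\mathcal{J}(\vg,\vh)$ with the Sylvester matrix $S(g,h)$ up to a permutation, and invoke the resultant characterization of coprimality. Your write-up is in fact slightly sharper on the bookkeeping — by exploiting the bilinearity of each $Q_\ell$ you pin down that only the two column blocks need to be swapped (relative to the matrix displayed in Definition~\ref{def:resultant}), whereas the paper speaks loosely of a permutation of both rows and columns — but the underlying argument is identical.
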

\begin{proof}
The key observation here is that the Jacobian matrix (see~\autoref{def:jacobian}) is the same as the \textit{Sylvester matrix} (see~\autoref{def:resultant}) up to a permutation of rows and columns. Concretely, let $R$ be the resultant of $g$ and $h$ when they are viewed as univariates in $y$. Since the $\mathsf{GCD}_y(f,g)$ is equal to $1$, their resultant is a non-zero polynomial of degree at most $O(d^2)$ in $\F[\vx]$. Recall that $R$ is the determinant of the following $d \times d$ matrix, called the Sylvester matrix $S$. For $i \in \{1, 2, \ldots, d_2\}$, the $i^{th}$ column of $S$ is equal to $(0, \ldots, 0, 1, g_{d_1-1}, g_{d_1-2}, \ldots, g_1, g_0, 0, \ldots, 0)$, where there are $i-1$ zeroes in the prefix. For $j \in \{d_2+1, \ldots, d\}$, t he $j^{th}$ column of $S$ equals $(0, \ldots, 0, 1, h_{d_2-1}, h_{d_2-2}, \ldots, h_1, h_0, 0, \ldots, 0)$, where there are $j - d_2-1$ zeroes in the prefix. 
\iffalse
\[
\begin{pmatrix}
g_{d_1}&0&\cdots&0 & h_{d_2} & 0 & \cdots &0\\
g_{d_1-1}&g_{d_1} &\cdots &0 & h_{d_2-1} & h_{d_2} & \cdots &0\\
g_{d_1-2}&g_{d_1-1}&\cdots&0 & h_{d_2-2} & h_{d_2-1} & \cdots &0\\
\vdots&\vdots&\vdots&g_{d_1}&\vdots&\vdots&\vdots&\vdots\\
g_{1}&g_{2}&\cdots&\vdots & h_{d_2-d_1+1} & h_{d_2-d_1+2} & \cdots & \cdot \\
g_{0}&g_{1}&\cdots&\vdots & h_{d_2-d_1} & h_{d_2-d_1+1} & \cdots & \cdot \\
0&g_{0}&\cdots&\vdots & h_{d_1+1} & h_{d_1-1} & \cdots & \cdot \\
\vdots&\vdots&\vdots&\vdots&\vdots&\vdots\\
0&0&\cdots&g_0&0&0&\cdots)h_0
\end{pmatrix}
\]
Here, the first $d_2$ columns have the coefficients of $g$ and the last $d_1$ columns have the coefficients of $h$. For brevity, let us assume that $d_1 < d_2$. 
\fi
We now write the $d\times d$ matrix ${\cal J}(\vu, \vw)$. 
\[
{\cal J}(\vu, \vw)=\begin{pmatrix}\frac{\partial Q_0}{\partial u_0}&\cdots&\frac{\partial Q_0}{\partial u_{d_1-1}}&\frac{\partial Q_0}{\partial w_0}&\cdots&\frac{\partial Q_0}{\partial w_{d_2-1}}\\
\frac{\partial Q_1}{\partial u_0}&\cdots&\frac{\partial Q_1}{\partial u_{d_1-1}}&\frac{\partial Q_1}{\partial w_1}&\cdots&\frac{\partial Q_1}{\partial u_{d_2}}\\
\vdots&\vdots&\vdots&\vdots&\vdots&\vdots\\
\frac{\partial Q_{d-1}}{\partial u_0}&\cdots&\frac{\partial f_{d-1}}{\partial u_{d_1-1}}&\frac{\partial Q_{d-1}}{\partial w_0}&\cdots&\frac{\partial f_{d-1}}{\partial w_{d_2-1}}
\end{pmatrix}
\]
Plugging in the expressions for the partial derivatives, we get that for $i \in \{1, 2, \ldots, d_1\}$, the $i^{th}$ column of ${\cal J}(\vu, \vw)$ is equal to $(0, \ldots, 0, w_0, w_1, w_2, \ldots, w_{d_2-1}, 1, 0, \ldots, 0)$, where there are $i-1$ zeroes in the prefix. For $j \in \{d_1+1, \ldots, d\}$, the $j^{th}$ column of $S$ equals $(0, \ldots, 0, u_0, u_{1}, u_{2}, \ldots, u_{d_1-1}, 1, 0, \ldots, 0)$, where there are $j - d_2-1$ zeroes in the prefix. Therefore, after substitution, the columns of ${\cal J}(\vg, \vh)$ are precisely the same as columns of $S$ up to a permutation of rows and columns. In other words, their ranks are equal. We know that $S$ is non-singular, so it follows that ${\cal J}(\vg, \vh)$ is also non-singular. 
\iffalse

\[
{\cal J}(\vu, \vw)=\begin{pmatrix}w_0&\cdots&0&
u_0&\cdots&0\\
w_1&\cdots&0&u_1&\cdots&0\\
\vdots&\vdots&\vdots&\vdots&\vdots&\vdots\\
\frac{\partial Q_{d-1}}{\partial u_0}&\cdots&\frac{\partial f_{d-1}}{\partial u_{d_1-1}}&\frac{\partial Q_{d-1}}{\partial w_0}&\cdots&\frac{\partial f_{d-1}}{\partial w_{d_2-1}}
\end{pmatrix}
\]

\[
\begin{pmatrix}g_0&\cdots& g_{d_1}& h_0&\cdots&\frac{\partial f_0}{\partial h_{d_2}}\\
\frac{\partial f_1}{\partial g_0}&\cdots&\frac{\partial f_1}{\partial g_{d_1}}&\frac{\partial f_1}{\partial h_1}&\cdots&\frac{\partial f_1}{\partial h_{d_2}}\\
\vdots&\vdots&\vdots&\vdots&\vdots&\vdots\\
\frac{\partial f_{D-1}}{\partial g_0}&\cdots&\frac{\partial f_{D-1}}{\partial g_{d_1}}&\frac{\partial f_{D-1}}{\partial h_0}&\cdots&\frac{\partial f_{D-1}}{\partial h_{d_2}}
\end{pmatrix}
 \]
 \fi
\end{proof}

\begin{remark}\label{rmk:non-singular at origin}
For the rest of the proof, we assume without loss of generality that ${\cal J}(\vg(\mathbf{0}), \vh(\mathbf{0}))$ is non-singular. This follows from the fact that since ${\cal J}(\vg(\mathbf{x}), \vh(\mathbf{x}))$ is non-singular, there is  a $\vb$ such that ${\cal J}(\vg(\mathbf{b}), \vh(\mathbf{b}))$ is non-singular, and up to a translation of the coordinate axes, we can assume that $\vb = \mathbf{0}$. 
\end{remark}

\subsection{Newton Iteration for many variables}
We now show that given the constant term for each polynomial in $\vg, \vh$, we can  recover the polynomials completely. The argument is via a natural and well known multivariate analog of the standard Newton Iteration. Clearly, the constant terms have small circuits (trivial circuits of size $1$), and we show that in this iterative process, we can recover multioutput circuits for $\vg, \vh$ of size $\poly(s, n, d)$. 
\begin{lemma}[One step of Newton Iteration]\label{lem:newton}
 Let $k \geq1$ be any integer.  Let $C_k$ be a multioutput circuit of size most $s_k$ computing polynomials $\tilde{\vg}_k= \left(\tilde{g}_{0,k}, \tilde{g}_{1,k}, \ldots, \tilde{g}_{d_1-1, k}\right), \tilde{\vh}_k = \left(\tilde{h}_{0,k}, \tilde{h}_{1,k}, \ldots, \tilde{h}_{d_2-1, k}\right)$ such that for every $i$ and $j$, 
\[
\tilde{g}_{i,k} \equiv g_i \mod \langle\vx\rangle^k \, , 
\]
and 
\[
\tilde{h}_{j,k}\equiv h_j \mod \langle\vx\rangle^k \, .
\]
Then, there is a constant $c$ independent of $k$ such that the following is true: there is a multioutput circuit $C_{k+1}$ of size at most $s_{k+1} = s_k + (snd)^c$ which computes the polynomials  $\tilde{\vg}_{k+1} = \left(\tilde{g}_{0, k+1}, \tilde{g}_{1, k+1}, \ldots, \tilde{g}_{d_1-1, k+1}\right)$, $\tilde{\vh}_{k+1} = \left(\tilde{h}_{0, k+1}, \tilde{h}_{1, k+1}, \ldots, \tilde{h}_{d_2-1, k+1}\right)$ such that for every $i$ and $j$, 
\[
\tilde{g}_{i, k+1} \equiv g_i \mod \langle\vx\rangle^{k+1} \, , 
\]
and 
\[
\tilde{h}_{j, k+1} \equiv h_j \mod \langle\vx\rangle^{k+1} \, .
\]
%Moreover, $\tilde{\vg}_{k+1}$ and $\tilde{\vh}_{k+1}$ are unique modulo $\langle \vx \rangle^{k+1}$.
\end{lemma}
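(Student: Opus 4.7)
The plan is to perform one step of multivariate Newton iteration by linearizing the system $\{Q_\ell = 0\}_{\ell \in [d]}$ around the current approximation $(\tilde{\vg}_k, \tilde{\vh}_k)$ and solving the resulting linear system modulo $\langle \vx \rangle^{k+1}$. Concretely, I apply Corollary~\ref{cor:multivariate taylor polynomials} to each $Q_\ell$ with perturbation $(\vg - \tilde{\vg}_k, \vh - \tilde{\vh}_k)$, whose components lie in $\langle \vx \rangle^k$ by the inductive hypothesis. Using $Q_\ell(\vg, \vh) = 0$, this yields for every $\ell \in [d]$ the linear congruence
\[
\sum_i \frac{\partial Q_\ell}{\partial u_i}(\tilde{\vg}_k, \tilde{\vh}_k)\cdot (g_i - \tilde{g}_{i,k}) + \sum_j \frac{\partial Q_\ell}{\partial w_j}(\tilde{\vg}_k, \tilde{\vh}_k)\cdot (h_j - \tilde{h}_{j,k}) \equiv -Q_\ell(\tilde{\vg}_k, \tilde{\vh}_k) \pmod{\langle \vx \rangle^{k+1}}.
\]

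Next, I observe that since $k \geq 1$ and each unknown correction lies in $\langle \vx \rangle^k$, replacing every Jacobian entry on the left-hand side by its constant term leaves the expression unchanged modulo $\langle \vx \rangle^{k+1}$. By the inductive hypothesis, $\tilde{\vg}_k(\mathbf{0}) = \vg(\mathbf{0})$ and $\tilde{\vh}_k(\mathbf{0}) = \vh(\mathbf{0})$, so the resulting coefficient matrix is exactly $A := {\cal J}(\vg(\mathbf{0}), \vh(\mathbf{0})) \in \F^{d \times d}$, which is non-singular by Remark~\ref{rmk:non-singular at origin}. Writing $\vb := (-Q_\ell(\tilde{\vg}_k, \tilde{\vh}_k))_{\ell \in [d]}$, the system becomes $A \cdot ((g_i - \tilde{g}_{i,k})_i,\,(h_j - \tilde{h}_{j,k})_j)^T \equiv \vb \pmod{\langle \vx \rangle^{k+1}}$, with unique solution $A^{-1}\vb$. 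I therefore set
\[
\tilde{g}_{i, k+1} := \tilde{g}_{i, k} + (A^{-1}\vb)_i, \qquad \tilde{h}_{j, k+1} := \tilde{h}_{j, k} + (A^{-1}\vb)_{d_1 + j},
\]
viewed as polynomials (not equivalence classes); by construction these satisfy the required congruences.

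For the circuit, I invoke Observation~\ref{obs:ckts for Q} to obtain a circuit of size $O(sd)$ for each $Q_\ell$, and compose it with $C_k$ to produce a multi-output subcircuit computing $(Q_\ell(\tilde{\vg}_k, \tilde{\vh}_k))_{\ell \in [d]}$ at an additive cost of $O(sd^2)$ on top of $s_k$. A single layer of $\F$-weighted addition gates with constant weights drawn from the (precomputed) matrix $-A^{-1}$ then produces the correction vector $A^{-1}\vb$, and one further layer of additions adds these corrections to the corresponding outputs of $C_k$ to form the outputs of $C_{k+1}$. The total additive size increase is $O(sd^2 + d^2)$, which is at most $(snd)^c$ for some absolute constant $c$ independent of $k$.

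The main subtle point is the replacement of the full Jacobian at $(\tilde{\vg}_k, \tilde{\vh}_k)$ by the constant matrix $A$: this uses both that the perturbation has order of vanishing at least $k \geq 1$ and that the inductive hypothesis forces $(\tilde{\vg}_k, \tilde{\vh}_k)$ and $(\vg, \vh)$ to agree at the origin. The key gain over a naive approach is precisely that $A$, and hence $A^{-1}$, is a fixed constant matrix, so the correction step contributes the same constant-sized combinational overhead at every iteration, keeping the per-step circuit blow-up independent of $k$.
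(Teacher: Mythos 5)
Your proposal is correct and follows essentially the same argument as the paper: linearize the system $\{Q_\ell = 0\}$ at $(\tilde{\vg}_k, \tilde{\vh}_k)$ via the truncated Taylor expansion, observe that the Jacobian can be replaced by its constant part $A = {\cal J}(\vg(\mathbf{0}),\vh(\mathbf{0}))$ because the unknowns lie in $\langle\vx\rangle^k$ with $k\ge 1$, invoke non-singularity of $A$ from Lemma~\ref{lem:jacobian}/Remark~\ref{rmk:non-singular at origin}, and update with a fixed $\F$-linear combination of the $Q_\ell(\tilde{\vg}_k,\tilde{\vh}_k)$. The only (cosmetic) deviation is that you work directly with the full differences $g_i-\tilde g_{i,k}\in\langle\vx\rangle^k$ rather than, as the paper does, first extracting the homogeneous degree-$k$ components $p_i,q_j$; this sidesteps the paper's explicit use of the fact that $Q_\ell(\tilde{\vg}_k,\tilde{\vh}_k)\in\langle\vx\rangle^k$ when matching $p_i$ against the non-truncated linear combination, but the underlying uniqueness argument and the resulting update rule and circuit construction are identical, and your per-step size bound $O(sd^2+d^2)\le (snd)^c$ matches the paper's.
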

\begin{proof}
For every $i \in [d_1]$ and $j \in [d_2]$, let $p_i$ and $q_j$ be homogeneous polynomials of degree equal to $k$ such that $\tilde{g}_{i, k}+ p_i \equiv g_i \mod \langle \vx \rangle^{k+1}$ and $\tilde{h}_{i, k} + q_j \equiv h_j \mod \langle \vx \rangle^{k+1}$. Let $\vp$ and $\vq$ be the tuples associated to $p_i's$ and $q_j's$. 
Our goal is to show that these polynomials $\tilde{g}_{i, k}+ p_i$ and $\tilde{h}_{j, k} + q_j$ have \emph{small} circuits. This would complete the proof of the lemma. To this end, we set up a system of linear equations in the $p's$ and $q's$ and show that this system has a unique solution. 

Let $\vg_i$ (resp. $\vh_i$) denote the tuple $(g_0  \mod \langle\vx\rangle^i, g_1 \mod \langle\vx\rangle^i, \ldots, g_{d_1-1} \mod \langle\vx\rangle^i)$ ( resp. $(h_0 \mod \langle\vx\rangle^i, h_1\mod \langle\vx\rangle^i, \ldots, h_{d_2-1} \mod \langle\vx\rangle^i)$). 
For each $\ell\in\{0,1,\dots,d-1\}$, since $Q_{\ell}(\mathbf{g}, \mathbf{h}) = 0$, it follows that 
  $$Q_{\ell}(\mathbf{g}_{k+1}, \mathbf{h}_{k+1}) \equiv 0 \mod \langle \vx \rangle^{k+1} \, . $$ 
By their definition, and the hypothesis of the lemma, we know that $\vp$ and $\vq$ must satisfy 
 $$Q_{\ell}\left(\tilde{\mathbf{g}}_k + \vp, \tilde{\mathbf{h}}_k + \vq\right)\equiv 0 \mod \langle \vx \rangle^{k+1} \, .$$ 
Since each $p_i$ and $q_j$ is a homogeneous polynomial of degree equal to $k$,  via the multivariate Taylor expansion for polynomials (see~\autoref{cor:multivariate taylor polynomials}) for $Q_{\ell}$ around the point $(\tilde{\vg}_k, \tilde{\vh}_k)$ , we  get 
$$Q_{\ell}(\tilde{\mathbf{g}}_k, \tilde{\mathbf{h}}_k)  + \sum_{i\in [d_1]} \frac{\partial Q_{\ell}}{\partial u_i}(\tilde{\mathbf{g}}_k, \tilde{\mathbf{h}}_k)\cdot p_i + \sum_{j\in [d_2]} \frac{\partial Q_{\ell}}{\partial w_j}(\tilde{\mathbf{g}}_k, \tilde{\mathbf{h}}_k)\cdot  q_j\equiv 0 \mod \langle \vx \rangle^{k+1}  \, . $$ 
Note that we used the fact that $p_i$ and $q_j$ are homogeneous polynomials of degree equal to $k$ and hence their squares and higher powers vanish modulo $\langle \vx \rangle^{k+1}$. Moreover,  the only monomials of degree at most $k$ in $\frac{\partial Q_{\ell}}{\partial u_i}(\tilde{\mathbf{g}}_k, \tilde{\mathbf{h}}_k)\cdot p_i$ are those in $\left(\frac{\partial Q_{\ell}}{\partial u_i}(\tilde{\mathbf{g}}_k, \tilde{\mathbf{h}}_k) \mod \langle \vx \rangle\right)\cdot  p_i\, .$
We also know from the hypothesis that $\left(\frac{\partial Q_{\ell}}{\partial u_i}(\tilde{\mathbf{g}}_k, \tilde{\mathbf{h}}_k) \mod \langle \vx \rangle\right)$ is equal to $\left(\frac{\partial Q_{\ell}}{\partial u_i}(\mathbf{g}(\mathbf{0}), \mathbf{h}(\mathbf{0})\right)$.
Applying these simplifications to the Taylor expansion for $Q_{\ell}$, we get 
\begin{eqnarray*}
-Q_{\ell}(\tilde{\mathbf{g}}_k, \tilde{\mathbf{h}}_k)  &\equiv& \sum_{i\in [d_1]} \left(\frac{\partial Q_{\ell}}{\partial u_i}(\mathbf{g}(\mathbf{0}), \mathbf{h}(\mathbf{0}))\right)\cdot p_i 
+ \sum_{j\in [d_2]} \left(\frac{\partial Q_{\ell}}{\partial w_i} (\mathbf{g}(\mathbf{0}), \mathbf{h}(\mathbf{0}))\right)\cdot  q_j  \mod \langle \vx \rangle^{k+1} \, . 
\end{eqnarray*}
Let $\vv = (v_0, v_1, \ldots, v_{d_1-1})$ and $\vz = (z_0, z_1, \ldots, z_{d_2-1})$ be new sets of variables.  For various values of $\ell$, let us consider the affine constraint on these variables given by the equation. 
\begin{eqnarray*}
-Q_{\ell}(\tilde{\mathbf{g}}_k, \tilde{\mathbf{h}}_k) \mod \langle \vx \rangle^{k+1} &=& \sum_{i\in [d_1]} \left(\frac{\partial Q_{\ell}}{\partial u_i}(\mathbf{g}(\mathbf{0}), \mathbf{h}(\mathbf{0}))\right)\cdot v_i + \sum_{j\in [d_2]} \left(\frac{\partial Q_{\ell}}{\partial w_i} (\mathbf{g}(\mathbf{0}), \mathbf{h}(\mathbf{0}))\right)\cdot  z_j  \, . 
\end{eqnarray*}
So, we get a system of non-homogeneous linear equations of the form $A\cdot (\vv, \vz)^{T} - \va = 0$, where the matrix $A$ equals the matrix ${\cal J}(\vg(\mathbf{0}), \vh(\mathbf{0}))$, which by~\autoref{lem:jacobian} is non-singular. Thus, this system has unique solution which is given by $\mathbf{\vz} = A^{-1} \mathbf{a}$. From our set up above, $(\vp, \vq)$ is a solution to this system of equations, and thus by uniqueness of solution, we get that that there are field constants $\set{\beta_{i, \ell} : i \in [d_1], \ell \in [d]}$ and $\set{\gamma_{j, \ell} : j\in [d_2] , \ell\in [d]}$ in $\F$ such that  for every $i\in [d_1]$ and $j\in [d_2]$, 
$$p_i = \sum_{\ell\in [d]} \beta_{i, \ell}\left(Q_{\ell}(\tilde{\mathbf{g}}_k, \tilde{\mathbf{h}}_k) \mod \langle \vx \rangle^{k+1} \right) \, ,$$ 
$$q_j = \sum_{\ell\in [d]} \gamma_{j, \ell}\left(Q_{\ell}(\tilde{\mathbf{g}}_k, \tilde{\mathbf{h}}_k) \mod \langle \vx \rangle^{k+1} \right) \, .$$ 
%where $\beta_{i,\ell}$ are in $\F$. 
In other words, 
$$p_i = \left(\sum_{\ell\in [d]} \beta_{i, \ell}\left(Q_{\ell}(\tilde{\mathbf{g}}_k, \tilde{\mathbf{h}}_k) \right)\right) \mod \langle \vx \rangle^{k+1}\, ,$$ 
and 
$$q_j = \left(\sum_{\ell\in [d]} \gamma_{j, \ell}\left(Q_{\ell}(\tilde{\mathbf{g}}_k, \tilde{\mathbf{h}}_k) \right)\right) \mod \langle \vx \rangle^{k+1}\, ,$$ 
Now, recall that for every $\ell$, $\Q_{\ell}(\tilde{\vg}_k, \tilde{\vh}_k)$ is a polynomial which is zero modulo $\langle \vx \rangle^{k}$. Thus, the polynomial $\tilde{g}_{i,k} + \left(\sum_{\ell} \beta_{i, \ell}\left(Q_{\ell}(\tilde{\mathbf{g}}_k, \tilde{\mathbf{h}}_k) \right)\right)$ is equal to $g_i$ modulo $\langle \vx \rangle^{k+1}$, $\tilde{g}_{i,k}$ agrees with $g_i$ at monomials of degree less than $k$, and we are adding to it the \emph{correct} homogeneous polynomial of degree equal $k$. Thus, we define $\tilde{g}_{i, k+1}$ and $\tilde{h}_{j, k+1}$ as 
\[
\tilde{g}_{i, k+1} := \tilde{g}_{i,k} + \left(\sum_{\ell \in [d]} \beta_{i, \ell}\left(Q_{\ell}(\tilde{\mathbf{g}}_k, \tilde{\mathbf{h}}_k) \right)\right) \, ,
\]
and 
\[
\tilde{h}_{j, k+1} := \tilde{h}_{j,k}+ \left(\sum_{\ell \in [d]} \gamma_{j, \ell}\left(Q_{\ell}(\tilde{\mathbf{g}}_k, \tilde{\mathbf{h}}_k) \right)\right) \, .
\]
All that remains now is to argue that there is a small circuit computing $\tilde{\vg}_{k+1}$ and $\tilde{\vh}_{k+1}$. We can obtain a circuit for computing $\tilde{\mathbf{g}}_{k+1}$ and $\tilde{\mathbf{h}}_{k+1}$ from the circuit computing $\tilde{\mathbf{g}}_{k}$ and $\tilde{\mathbf{h}}_{k}$ by adding a copy of circuits for $Q_0, Q_1, \ldots, Q_{d-1}$ at the top and a layer of addition gates above it with appropriate edge weights. The size therefore increases additively by a fixed polynomial in $s, n, d$ in each step.
%The uniqueness part in the claim follows from the fact that the system of linear equations we solve here has a unique solution. This implies that each solution to this system of equations modulo $\langle \vx \rangle^{k}$ has a unique lift to a solution modulo $\langle \vx \rangle^{k+1}$.  Thus, for any two solutions which agree modulo $\langle \vx \rangle$ must be equal to each other. This completes the proof of the claim. 
\end{proof}
We now complete the proof of~\autoref{clm:distinct factors}.
\begin{proof}[Proof of~\autoref{clm:distinct factors}]
Observe that modulo $\langle \vx \rangle$, the $g_i$ and $h_j$ trivially have a circuit of size $1$, since they are just constants. Now, using this as the base case, we use~\autoref{lem:newton} $d+1$ for times to obtain a multioutput circuit $C_{d+1}$ of size at most $\poly(s, n, d)$, which computes polynomials $\tilde{\vg}_{d+1}$ and $\tilde{\vh}_{d+1}$ such that for every $i$ 
\[
\tilde{g}_{i, d+1} \equiv g_i \mod \langle \vx \rangle^{d+1} \, .
\] 
But, each $g_i$ is of degree at most $d$. Thus, we get 
\[
\tilde{g}_{i, d+1} \mod \langle \vx \rangle^{d+1} = g_i  \, .
\] 
So, we recover a circuit for each $g_i$ and each $h_j$, we  homogenize (see~\autoref{lem:homogenization}) the circuit $C_{d+1}$ to get a homogeneous circuit $C_{d+1}'$, which has an output gate for each homogeneous component of degree at most $d$ for every output of $C_{d+1}$.  This incurs an additional multiplicative blow up of $O(d^2)$ on the size of $C_{d+1}$. From the circuit $C_{d+1}'$, we can just read off the polynomials $g_i$ (resp. $h_j$) by taking an appropriate linear combinations of the outputs of $C_{d+1}'$, which only incurs an additive $\poly(s, n, d)$ blow up in the size of the circuit. This completes the proof of the lemma.
\end{proof}

%\fi
\section{Proof of Lemma~\ref{clm:purepower}}\label{sec:proof claim 2}
We have polynomials $f$ and $g$ such that $f=g^e$ where $f$ has a circuit of size at most $s$. Since $d$ be the degree of $f$,  both $e$ and the degree of $g$ are upper bounded by $d$. The goal is showing that $g$ has a circuit of size at most $\poly(s,n,d)$. Now, consider the following polynomial
\[
\tilde{f} := z^e-f = z^e-g^e
\]
where $z$ is a new variable. Note that $\tilde{f}$ has a circuit of size at most $s+O(\log d)$. Next, decompose $\tilde{f}$ as follows.
\[
\tilde{f} = \left(z-g\right)\cdot\left(z^{e-1}+z^{e-2}g+\cdots+g^{e-1}\right) \, .
\]
Observe that if $g\not\equiv0$ and the characteristic of the field is zero or large enough, then the $\mathsf{GCD}_z$ of the polynomial $\left(z-g\right)$ and the polynomial $\left(z^{e-1}+z^{e-2}g+\cdots+g^{e-1}\right)$ is $1$. The reason is that $\left(z-g\right)$ is irreducible and does not divide $\left(z^{e-1}+z^{e-2}g+\cdots+g^{e-1}\right)$ when $g\not\equiv0$ and the characteristic of the field is zero or large enough. Finally, by~\autoref{clm:distinct factors}, $\left(z-g\right)$ has a circuit of size at most $\poly(s,n,d)$ and thus $g$ also has a circuit of size $\poly(s, n, d)$.

\section*{Acknowledgements}
We thank Vishwas Bhargav, Swastik Kopparty, Ramprasad Saptharishi and Srikanth Srinivasan for various insightful discussions and for encouraging us to write the proof up. 
\bibliographystyle{alphaurlpp}
\bibliography{references}

\end{document}